\DeclareMathOperator*{\argmin}{arg\,min}
\begin{document}
\baselineskip=20pt

\begin{center}{
  \Large \bf
  Penalized basis models for very large spatial datasets
}

\bigskip
\bigskip

{\bf 
  Mitchell Krock\footnote[1]{Department of Applied Mathematics,
  University of Colorado, Boulder, CO.  
  Corresponding author e-mail:
  \texttt{mitchell.krock@colorado.edu}}, 
  William Kleiber\footnotemark[1] 
  and 
  Stephen Becker\footnotemark[1]
}

\bigskip
\bigskip

{\bf \today}

\end{center}

\bigskip
\bigskip

\begin{abstract}
Many modern spatial models express the stochastic variation component as a 
basis expansion with random coefficients.  Low rank models, approximate spectral 
decompositions, multiresolution representations, stochastic partial differential 
equations and empirical orthogonal functions all fall within this basic framework. 
Given a particular basis, stochastic dependence relies on flexible modeling 
of the coefficients. Under a Gaussianity assumption, we propose a 
graphical model family for the stochastic coefficients by parameterizing the 
precision matrix.  Sparsity in the precision matrix is encouraged using a 
penalized likelihood framework.  Computations follow from a majorization-minimization 
approach, a byproduct of which is a connection to the graphical lasso. 
The result is a flexible nonstationary spatial model that is adaptable to 
very large datasets. We apply the model to two large and heterogeneous 
spatial datasets in statistical climatology and recover physically sensible 
graphical structures. Moreover, the model performs competitively against 
the popular LatticeKrig model in predictive cross-validation, but substantially 
improves the Akaike information criterion score.
\\
\bigskip
\noindent
{\sc Keywords: spatial basis functions, graphical model, graphical lasso} 
\end{abstract}


\section{Introduction}

Many modern spatial models express the stochastic variation component as a 
basis expansion with random coefficients.  Low rank models, approximate spectral 
decompositions, multiresolution representations, stochastic partial differential 
equations and empirical orthogonal functions all fall within this basic framework. 
The essential difference between these methods is the amount of modeling effort 
placed on the basis versus the coefficients. 

We introduce a novel approach applicable to any model within this framework 
that allows for nonstationarity and easily adapts to large datasets using 
off-the-shelf popular basis models. 
The method allows for straightforward graphical interpretations of the 
conditional independence structure of the stochastic coefficients. 

Most spatial statistical models for an observational process $Y(\bs)$ with 
$\bs\in\real^d$ can be written 
\begin{equation}  \label{eq:obs.model}
  Y(\bs) = \mu(\bs) + Z(\bs) + \vep(\bs)
\end{equation}
which decomposes the observations into a constant mean function $\mu$, 
a spatially-correlated random deviation $Z$ and a white noise process $\vep$. 
Flexible models for the correlated deviation $Z$ are necessary, and much research 
has been devoted to exploring classes of practical specifications. 

In this work we focus on a particularly popular framework where 
\begin{equation}  \label{eq:basis}
  Z(\bs) = \sum_{j=1}^\ell c_j \phi_j(\bs)
\end{equation}
for $\ell$ fixed basis functions $\phi_1,\dots,\phi_\ell$ 
and stochastic coefficients $\bc = (c_1,\ldots,c_\ell)\T \sim N(0,Q^{-1})$. 
Such a spatial basis expansion subsumes many popular approaches including 
discretized frequency domain models \citep{fuentes2002, matsuda2009, bandy2010}, 
empirical orthogonal functions \citep[Ch.\ 5,][]{wikle}, 
low rank representations \citep{cressie2008,banerjee2008,guhaniyogi2017}, 
multiresolution and wavelet representations \citep{nychka2002,nychka2015,katzfuss}, 
and stochastic partial differential equation models \citep{lindgren2011,bolin}.

To set the stage, let us briefly describe some specific models of (\ref{eq:basis}). 
Fixed rank kriging sets $\ell$ to be relatively small compared to the sample size, 
uses bisquare functions for $\phi_i$, and $Q^{-1}$ is not modeled but rather 
estimated by minimizing a squared Frobenius distance from an binned empirical 
covariance matrix \citep{cressie2008}.  
The more recent LatticeKrig model is a multiresolution model that places a large 
number of compactly supported basis functions with varying supports on a grid 
and specifies $\bc$ as a Gaussian Markov random field \citep{nychka2015}. 
Discretized frequency domain approaches and empirical orthogonal functions set the 
basis functions to be globally supported with independent coefficients (diagonal 
$Q$), mirroring the spectral representation theorem or the Karhunen-Loeve expansion 
for stochastic processes. 
In this work we attempt to relax the modeling assumptions on the structure 
of the stochastic coefficients $\bc$ (equivalently $Q$) by assuming only that 
they arise from a Gaussian graphical model.  We use the well known fact that 
the graph structure of a multivariate Gaussian is equivalent to the 
zero/non-zero pattern in the precision matrix $Q$ \citep{rue2005}. 
A major distinction of this work is that we 
do not specify the structure of the graph, but instead try to infer its 
edges by estimating the entries of $Q$, while encouraging sparsity using a 
penalized likelihood estimation framework. 

The penalization framework appears straightforward, using an $\ell_1$-penalized 
maximum likelihood estimator for $Q$. However, the corresponding optimization 
problem is nonconvex, but we show it can be solved efficiently using a 
majorization-minimization approach where the interior minimization problem 
corresponds to the graphical lasso problem \citep{friedman_sparse_2008}. 
We perform a relatively detailed simulation study to assess the algorithm and 
model's ability to recover unknown graphical structures, and also apply the 
method to two challenging large and heterogeneous datasets: the first 
a historical observational dataset of minimum temperatures over a portion of 
North America, and the second a global reforecast dataset of surface temperature. 
The results from these real data applications suggest our method can 
appropriately capture nonstationary spatial correlations with minimal modeling 
effort. 
 
\section{Penalized likelihood estimation}

For ease of exposition we suppose $\mu(\bs) = 0$ in (\ref{eq:obs.model}).  
Thus the observational model is 
\begin{equation}  \label{basis}
  Y(\bs) = \sum_{j=1}^\ell c_j \phi_j(\bs)+ \vep(\bs).
\end{equation}
We suppose we have $m$ independent realizations, $Y(\bs)=Y_1(\bs),\ldots,Y_m(\bs)$ 
of the observational process at spatial locations $\bs=\bs_1,\ldots,\bs_n$.  
Group a realization as $\bY_i = (Y_i(\bs_1),\ldots,Y_i(\bs_n))\T$.  A matrix 
representation of the model is 
\begin{equation} \label{lineargaussianmodel}
  \bY_i = \Phi \bc_i + \bvep_i, \quad i=1,\dots,m
\end{equation}
where $\Phi$ is an $n \times \ell$ matrix with $(i,j)$th entry $\phi_j(\bs_i)$, 
$\bc_i = (c_{i1},\ldots,c_{i\ell})\T$ are $m$ independent vectors of the 
stochastic coefficients and $\bvep_i = (\vep_i(\bs_1),\ldots,\vep_i(\bs_n))\T$ 
are $m$ independent realizations of the white noise process. 
The stochastic assumptions of our model are that $\vep_i$ is a mean 
zero white noise process with variance $\tau^2>0$, commonly referred to as the 
nugget effect in the geostatistical literature, and that $\bc_i$ is a mean zero 
$\ell$-variate multivariate normal random vector with precision matrix $Q$. 
The zero structure of $Q$ encodes the graphical model for $\bc_i$.

The model (\ref{basis}) plays a crucial role in modern statistics: it is the 
framework for a variety of popular statistical techniques including factor 
analysis, principal component analysis, linear dynamical systems, hidden 
Markov models, and relevance vector machines \citep{unifying, rvm}. In the spatial 
context, there are two main features that arise from using a model of the form 
(\ref{basis}): first, the resulting model of the spatial field is nonstationary, 
and second, common computations involving the covariance matrix can be 
sped up with particular choices for $\Phi$ and $Q$.


\subsection{The penalized likelihood}

With the $m$ realizations $\bY_1,\ldots,\bY_m$ of (\ref{lineargaussianmodel}), 
twice the negative log-likelihood function can be written, up to multiplicative and 
additive constants, as
\begin{equation}\label{neg.log.lik}
  \log \det  ( \Phi Q^{-1} \Phi^{\mathrm{T}} + \tau^2 I_n) 
  + \text{tr}(S (\Phi Q^{-1} \Phi^{\mathrm{T}} + \tau^2 I_n)^{-1})
\end{equation}
where $S = \frac{1}{m} \sum_{i=1}^m\bY_i \bY_i\T$ is an empirical covariance matrix. 

Our goal is to estimate $Q$ under the assumption that it follows a graphical 
structure. A few connections are worth noting.  When $\Phi = I_\ell$ and $\tau^2=0$, 
(\ref{neg.log.lik}) is, up to the regularization term, the graphical lasso problem 
studied in \citep{friedman_sparse_2008}. 
In particular, this is equivalent to directly observing $\bc_1,\dots,\bc_m$.  
The graphical lasso uses an $\ell_1$ penalty to induce a graph structure 
on $Q$, from which we draw inspiration next. 
Our situation is substantially more complicated due to observational error and 
indirect observations of $\bc_i$ that are modulated by $\Phi$. 

Our proposal is to estimate $Q$ by minimizing a penalized version of 
(\ref{neg.log.lik}).  The estimator of $Q$ is 
\begin{equation}  \label{basisMLE}
 \hat{Q} \in \argmin \limits_{Q \succeq 0}  \  
  \log \det( \Phi Q^{-1} \Phi^{\mathrm{T}} + \tau^2 I_n) 
  + \text{tr}(S (\Phi Q^{-1} \Phi^{\mathrm{T}} + \tau^2 I_n)^{-1}) 
  + \|\Lambda \circ Q\|_1.
\end{equation}
The notation $Q \succeq 0$ indicates that $Q$ must be positive semidefinite, and 
$\|\Lambda \circ Q\|_1 = \sum_{i,j} \Lambda_{ij} |Q_{ij}|$ is a penalty term that enforces sparsity on the elements of $Q$. Here $\Lambda_{ij}$ are nonnegative penalty parameters, 
with higher values in the matrix $\Lambda$ encouraging more zeros 
in the estimate. In this paper, we assume that the diagonal elements of 
$\Lambda$ are zero, reflecting the fact that we are searching for sparsity 
in the off-diagonal elements of $Q$.

At first glance it is hard to determine whether the problem (\ref{basisMLE}) is 
convex or nonconvex. We address this issue explicitly in the next section.
In either case, it will be difficult to work with the 
objective function presented in (\ref{basisMLE}) due to the nested inverses 
surrounding $Q$. Employing a slew of matrix identities allows for the following 
result, whose proof is kept in the Appendix. 
\begin{proposition}  \label{prop:likelihood}
  The minimizer of (\ref{basisMLE}) is also the minimizer of 
  \begin{equation}  \label{Qlikelihood}
    \log \det \left(Q + \frac{ 1}{ \tau^2}\Phi^{\mathrm{T}} \Phi \right) 
    - \log \det Q - \mathrm{tr} \left(  \dfrac{1}{\tau^4} \Phi^{\mathrm{T}} S \Phi 
    \left( Q + \frac{1}{\tau^2} \Phi^{\mathrm{T}} \Phi  \right) ^{-1} \right) 
    + \|\Lambda \circ Q\|_1.
  \end{equation}
\end{proposition}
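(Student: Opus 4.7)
The plan is to show that the objective in (\ref{Qlikelihood}) equals the objective in (\ref{basisMLE}) up to an additive constant that does not depend on $Q$. Since the penalty term $\|\Lambda\circ Q\|_1$ already appears identically in both expressions, it suffices to transform the first two terms of (\ref{neg.log.lik}) into the first three terms of (\ref{Qlikelihood}) modulo constants. Two classical matrix identities will do all the work: Sylvester's determinant theorem for the log-det term and the Woodbury identity for the trace term. Implicit throughout is that the domain is restricted to $Q \succ 0$ so that $Q^{-1}$ and $\log\det Q$ are well-defined; on this domain both objectives are finite and we may freely manipulate inverses.

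For the log-det term, I would apply Sylvester's identity $\det(I_n + AB) = \det(I_\ell + BA)$ after pulling out a factor of $\tau^2 I_n$:
\begin{equation*}
  \det(\Phi Q^{-1}\Phi\T + \tau^2 I_n)
  = \tau^{2n}\det\bigl(I_\ell + \tfrac{1}{\tau^2}\Phi\T\Phi\, Q^{-1}\bigr)
  = \tau^{2n}\det(Q^{-1})\det\bigl(Q + \tfrac{1}{\tau^2}\Phi\T\Phi\bigr).
\end{equation*}
Taking logarithms reproduces the first two terms of (\ref{Qlikelihood}) exactly, with an additive constant $n\log\tau^2$ that is independent of $Q$.

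For the trace term, the Woodbury identity with $A = \tau^2 I_n$, $U = \Phi$, $C = Q^{-1}$, $V = \Phi\T$ yields
\begin{equation*}
  (\Phi Q^{-1}\Phi\T + \tau^2 I_n)^{-1}
  = \tfrac{1}{\tau^2} I_n - \tfrac{1}{\tau^4}\Phi\bigl(Q + \tfrac{1}{\tau^2}\Phi\T\Phi\bigr)^{-1}\Phi\T.
\end{equation*}
Plugging this into the trace, using linearity and the cyclic property $\mathrm{tr}(S\Phi M \Phi\T) = \mathrm{tr}(\Phi\T S \Phi M)$, produces the third term of (\ref{Qlikelihood}) together with the $Q$-independent constant $\tfrac{1}{\tau^2}\mathrm{tr}(S)$.

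Putting the pieces together, the objective in (\ref{basisMLE}) equals the objective in (\ref{Qlikelihood}) plus $n\log\tau^2 + \tfrac{1}{\tau^2}\mathrm{tr}(S)$, so any minimizer of one is a minimizer of the other. The proof is almost entirely bookkeeping; the only conceptual care required is verifying that the Woodbury and Sylvester identities apply in the degenerate regime (e.g., rank-deficient $\Phi$ or near-singular $Q$), but since we restrict to $Q\succ 0$ and $\tau^2>0$ guarantees $\tau^2 I_n + \Phi Q^{-1}\Phi\T \succ 0$, both identities are valid without qualification. Thus there is no genuine obstacle; the result is a clean algebraic reformulation whose value lies in replacing the nested $n\times n$ inverse by an $\ell\times\ell$ inverse, setting up the majorization-minimization scheme to follow.
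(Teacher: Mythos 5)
Your proof is correct and follows essentially the same route as the paper's: the paper likewise applies the Sherman--Morrison--Woodbury identity to the trace term and the matrix determinant lemma (your Sylvester step is just a cosmetic variant of it) to the log-determinant, arriving at the identical $Q$-independent constant $n\log\tau^2 + \mathrm{tr}(S)/\tau^2$. No gaps; your added remark about validity under $Q \succ 0$ and $\tau^2 > 0$ is a harmless bonus the paper leaves implicit.
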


After precomputing the matrix products $\Phi^{\mathrm{T}} \Phi$ and 
$\Phi^{\mathrm{T}} S \Phi$, evaluating (\ref{Qlikelihood}) involves inverses 
and determinants of only $\ell \times \ell$ matrices. This is the essential 
computational strategy of fixed rank kriging and LatticeKrig which is harnessed in different ways: by making either 
$\ell$ noticeably smaller than $n$ (fixed rank kriging), or by making $\ell$ extremely large but ensuring the resulting matrices are 
sparse (LatticeKrig).  

\subsection{Optimization approach} \label{algorithmsection}


In the $\ell=1$ case, (\ref{Qlikelihood}) is a univariate function that is twice differentiable on the positive real line. It is straightforward to select $\Phi^{\mathrm{T}} \Phi$, $\Phi^{\mathrm{T}}S\Phi$, and $\tau^2$ so that the second derivative has a negative value at some point along the positive real line. Thus (\ref{Qlikelihood}) is, in general, a nonconvex function on $Q \succeq 0$.
We can, however, show that the four summands in
(\ref{Qlikelihood}) are concave, convex, concave, and convex,
respectively, on $Q \succeq 0$; see the Appendix for details. 
Therefore, the objective function in (\ref{Qlikelihood}) can be written as
\begin{equation} \label{symboliclikelihood}
  \argmin \limits_{Q \succeq 0} f(Q) + g(Q) + \| \Lambda \circ Q\|_1
\end{equation}
where $f(Q) + \| \Lambda \circ Q\|_1$ is convex and $g(Q)$ is concave and differentiable.
A natural approach for this nonconvex problem is a Difference-of-Convex (DC) program \citep{dc} where we iteratively linearize the concave part $g(Q)$ at the previous guess $Q_j$ and solve the resulting convex problem:
\begin{equation} \label{DC}
  Q_{j+1} = \argmin \limits_{Q \succeq 0} \left( f(Q) + \text{tr}(\nabla g(Q_j)  Q) + 
  \|\Lambda \circ Q\|_1\right).
\end{equation}
Motivating the DC framework requires describing a more general scheme called a {\it majorization-minimization algorithm} \citep{mm}. We say that a function $h(\theta)$ is majorized by $m(\theta \mid \theta^*)$ at $\theta^*$ if $h(\theta) \le m(\theta \mid \theta^*)$ for all $\theta$ and $h(\theta^*) = m(\theta^* \mid \theta^*)$. Instead of directly minimizing $h(\theta)$, which can be very complicated, a majorizaton-minimization (MM) algorithm solves a sequence of minimization problems where the majorizing function at the previous guess is minimized:
\begin{equation} \label{mmalgorithm}
  \theta_{j+1} \leftarrow \argmin \limits_\theta  \ m(\theta \mid \theta_j).
\end{equation}
Combining (\ref{mmalgorithm}) with the definition of a majorant yields the inequality
\[
  h(\theta_{j+1}) \le m(\theta_{j+1} \mid \theta_{j}) \le 
  m(\theta_j \mid \theta_{j}) = h(\theta_j)
\]
and thus the algorithm is forced to a local minimum (or saddle point) of $h(\theta)$. The most famous instance of the MM algorithm in statistics is the expectation maximization (EM) algorithm, which under this framework uses Jensen's Inequality to construct majorizing functions for the conditional expectation of log likelihood equations.

Difference-of-convex programming, also called the Concave-Convex-Procedure (CCP), is a subclass of MM where the supporting hyperplane inequality $g(\theta) \le g(\theta_j) +\langle \nabla g(\theta_j), \theta-\theta_j \rangle$ is used to construct a majorizing function when $h(\theta)$ is written as the sum of a convex function and a concave differentiable function $g(\theta)$; i.e.,\ when $h(\theta)$ is a difference of convex functions. An added benefit under the DC framework is that the majorizing function is convex by construction and hence we solve a series of convex optimization problems in each step of (\ref{mmalgorithm}).

In our likelihood function, the convex part is
\[
  f(Q) = - \log \det Q
\]
and the concave part is
\[
  g(Q) = \log \det \left(Q + \frac{1}{ \tau^2} \Phi^{\mathrm{T}} \Phi  \right) - 
  \text{tr} \left(  \dfrac{1}{\tau^4} \Phi^{\mathrm{T}}  S \Phi 
  \left( Q + \frac{1}{\tau^2} \Phi^{\mathrm{T}} \Phi  \right) ^{-1} \right), 
\]
so the DC algorithm (\ref{DC}) becomes
\begin{equation}\label{QUICproblem}
  Q_{j+1} = \argmin_{Q \succeq 0}   \left(  - \log \det Q+\text{tr} 
  \left(  \nabla g(Q_j) Q\right) + \|\Lambda \circ Q\|_1 \right)
\end{equation}
where $\nabla g(Q_j)= \left(  I_\ell + \left ( Q_j + \frac{1}{\tau^2} \Phi^{\mathrm{T}} \Phi \right)^{-1}  \frac{1}{\tau^4} \Phi^{\mathrm{T}} S \Phi\right) \left( Q_j + \frac{1}{\tau^2} \Phi^{\mathrm{T}} \Phi \right)^{-1}$. The inner minimization problem in (\ref{QUICproblem}) is well-studied and known in statistics as the {\it graphical lasso} problem. The graphical lasso is used to estimate an undirected graphical model (i.e.\ GMRF neighborhood structure) for a mean zero multivariate Gaussian vector $\mathbf{X}$ under the assumption that we observe $\mathbf{X}$ directly and without noise. The standard graphical lasso estimate is obtained from the penalized negative log likelihood
\begin{equation} \label{graphicallasso}
  \argmin \limits_{Q \succeq 0} \ - \log \det Q  + \text{tr}(S_{\mathbf{X}}Q) 
  +  \|\Lambda \circ Q\|_1
\end{equation}
where $S_{\mathbf{X}}$ is the sample covariance of the realizations of $\mathbf{X}$. In effect, we have shown that the graphical structure of $Q$ given realizations from $\Phi \mathbf{c} + \boldsymbol \varepsilon$ can be discerned through a Concave-Convex-Procedure where the inner solve is a graphical lasso problem with the ``sample covariance'' matrix as a function of the previous guess $Q_j$.

A variety of numerical techniques have been proposed for the graphical lasso. \citep{yuanlin}  and \citep{banerjee_model_2008} both use interior point methods, but the latter examine the {\it dual problem} of (\ref{graphicallasso})
\begin{equation} \label{dual}
  \argmin \limits_{\|U\|_\infty \le \lambda} \ - \log \det (S+U) - n,
\end{equation}
where $\|U\|_\infty$ is the maximum absolute element of the matrix $U$, and solve (\ref{dual}) one column at a time via quadratic programming. \citep{friedman_sparse_2008} takes an identical approach but writes the dual problem of the columnwise minimization as a lasso regression, which they solve quickly using their own coordinate descent algorithm \citep{pathwise}. This implementation is available in the popular \texttt{R} package \texttt{glasso}. 
 
Advances in solving (\ref{graphicallasso}) in recent years have stemmed from the use of second-order methods that incorporate Hessian information instead of simply the gradient. A current state of the art algorithm is \texttt{QUIC} \citep{QUIC} which also features an \texttt{R} package of the same name. Briefly, the \texttt{QUIC} algorithm uses coordinate descent to search for a Newton direction based on a quadratic expansion about the previous guess and then an Armijo rule to select the corresponding stepsize. During the coordinate descent update, only a set of free variables are updated, making the procedure particularly effective when $Q$ is sparse. In the appendix, we provide \texttt{R} Code to solve (\ref{QUICproblem}) using \texttt{QUIC}. A very recent paper \citep{lineartime2019}, accompanied by Matlab code, shows that reformulating (\ref{graphicallasso}) as a maximum determinant matrix completion problem is a promising strategy.

\subsubsection{Estimating the nugget variance} \label{nug_estimate}

In practice, we must produce an estimate $\hat{\tau}^2$ which is fixed during the algorithm (\ref{QUICproblem}). For this purpose we return to the likelihood (\ref{neg.log.lik}), now rewritten as
\[
  f(Q,\tau^2) =  \log \det \left(Q + \frac{ 1}{\tau^2} \Phi^T \Phi  \right) 
  - \log \det Q  -  \text{tr} \left( \frac{1}{\tau^4} \Phi^T S \Phi 
  \left( Q + \frac{1}{\tau^2} \Phi^T \Phi \right) ^{-1}  \right) 
  + n \log \tau^2 + \frac{\text{tr}(S)}{\tau^2},
\]
under the assumption that $Q = \alpha I_\ell$ for some parameter $\alpha>0$. 
The maximizer of $f$ over $\tau^2$ and $\alpha$ yields an estimate for the nugget effect of our nonstationary process by an approximation to stationarity through $Q = \alpha I_\ell$. Our estimates $\hat{\tau}^2$ and $\hat{\alpha}$ were retrieved from an L-BFGS optimization routine via the \texttt{optim} function in \texttt{R}. In the simulation study below this approach is seen to empirically work very well. Jointly estimating a full model of $Q$ and $\tau^2$ is complicated and unlikely to result in substantial empirical improvement (see Section \ref{sim_comments}).

\subsubsection{Estimating the penalty weight} \label{lambda_estimate}

All that remains to specify is the penalty weight matrix $\Lambda$. One option follows \cite{bien}, in which a likelihood-based cross validation approach is used to select $\Lambda$ in the context of estimating a sparse covariance matrix. More formally, suppose we use $k$ folds and consider $t$ penalty matrices $(\Lambda_1,\dots,\Lambda_t)$. Let $\hat Q_\Lambda(S)$ be the estimate we get from applying our algorithm with empirical covariance $S = \frac{1}{m} \sum_{i=1}^m \bY_i\bY_i\T$ and penalty $\Lambda$. For $A \subseteq \{1,\dots,m\}$, let $S_A = |A|^{-1} \sum_{i \in A} \bY_i\bY_i\T$. We seek $\Lambda$ so that $\alpha(\Lambda)= \ell(\hat Q_\Lambda(S), S)$ is small, where 
\begin{equation} \label{finallikelihood}
  \ell(Q,S) = \log \det \left(Q + \frac{ 1}{\tau^2} \Phi^T \Phi  \right) 
  - \log \det Q -  \text{tr} \left( \dfrac{1}{\tau^4} \Phi^{\mathrm{T}}  S \Phi 
  \left( Q + \frac{1}{\tau^2} \Phi^{\mathrm{T}} \Phi  \right) ^{-1} \right) 
\end{equation}
is the unpenalized likelihood function in (\ref{Qlikelihood}). The cross validation approach is to partition $\{1,\dots,m\}$ into disjoint sets $\{A_1,\dots,A_k\}$ and select $\hat \Lambda = \argmin \limits_{ \Lambda \in \{\Lambda_1,\dots,\Lambda_t \}} \hat \alpha(\Lambda)$ where  $\hat \alpha(\Lambda) = k^{-1} \sum_{i=1}^k \ell(\hat Q_\Lambda (S_{A_i^c}),S_{A_i} )$.

Another option is through spatial cross validation, where we use training data to estimate estimate $\hat{Q}_{\Lambda_1}(S_{\text{train}}),\dots,\hat{Q}_{\Lambda_t}(S_{\text{train}})$ and then krige to the held out locations in the testing data. The penalty weight matrix producing the smallest RMSE would then be used in conjunction with the full sample covariance $S$ to obtain the final estimate $\hat{Q}$.

\subsubsection{Initial guess and convergence}
  
The nonconvex nature of this problem prohibits use of convergence criterion available for common convex optimization problems. Instead, we say that the DC scheme has ``converged'' when $\frac{\|Q_{j+1} - Q_j\|_{F}}{\|Q_j\|_{F}} < \varepsilon$ for some loose tolerance $\varepsilon=0.01$.
In this paper, we used the initial guess $Q_0 = I_\ell$, and this choice can produce large diagonal entries $Q_{ii} \gg 0$ since the diagonal penalty weights of $\Lambda$ are set to zero (see results in Section \ref{GEFS}).


\section{Simulation study}

This section contains a set of simulation studies to assess the ability of 
our proposed algorithm to recover unknown precision structures under the model 
(\ref{lineargaussianmodel}). The section is broken into two classes of basis functions -- localized bases 
whose support is spatially compact, and global basis functions that are 
nonzero over the entire domain. For each class we entertain multiple types of 
precision structures that are common in the graphical modeling literature.

For each choice of $n$, $\Phi$, and $Q$, the noise-to-signal ratio
$\tau^2/(\text{tr}( \Phi Q^{-1} \Phi^{\mathrm{T}})/n)$ 
is fixed at $0.1$ and hence determines the true nugget variance $\tau^2$. Our estimated nugget variance $\hat{\tau}^2$ is retrieved from Section \ref{nug_estimate}.
Although in these simulations the population mean is zero, in practice it is unknown, so throughout we use the standard unbiased estimator $S$ that includes an empirical demeaning which will reflect practical implications better than using the known mean.

\subsection{Local basis} \label{localbasisstudy}

First, we consider a localized problem where we use a basis of compactly 
supported functions on a grid using the LatticeKrig model setup 
\cite{nychka2015}, which we briefly describe here: basis functions are 
compactly supported Wendland functions whose range of support is set so that 
each function overlaps with 2.5 other basis functions along axial directions. 
The model basis functions will correspond 
to either a single level or multiresolution model. 
In the single level setup, functions are placed on a regular grid.  In the 
multiresolution setup, higher levels of resolution are achieved by increasing 
the number of basis functions and nodal points (e.g., the second level 
doubles the number of nodes in each axial direction).  The precision matrix 
$Q$ can be set to approximate Mat\'ern-like behavior, see \cite{nychka2015} 
for details.

We specify the variance of the multiresolution levels to behave like an 
exponential covariance by setting parameter $\nu=0.5$. In LatticeKrig, 
the precision matrix $Q$ is constructed according to a spatial 
autoregression parameterized by the value $\alpha$, which we fix at 
$\alpha = 4.05$. For simplicity, we employ no buffer region when constructing 
the Wendland bases; i.e.~there are no basis functions centered outside of 
the spatial domain. We use the \texttt{R} package \texttt{LatticeKrig} to 
set up the forementioned basis and precision matrices.
A total of $m=500$ realizations from the process 
(\ref{lineargaussianmodel}) under this model are generated, and we repeat this entire spatial data generation process over 30 independent trials.

The spatial domain is $[0,1] \times [0,1]$, and $n$ observation 
locations are chosen uniformly at random in this domain for different sample sizes 
$n \in \{100^2, 150^2, 200^2 \}$. 
For the single level Wendland basis, we use 
$\ell \in \{100,225,400\}$ basis functions. Attempting to mirror these 
dimensions in the multiresolution basis, we use $\ell \in \{119, 234, 404\}$ 
which respectively correspond to 1) four multiresolution levels, the coarsest 
containing 2 Wendland basis functions, 2) three multiresolution levels, the 
coarsest containing 4 Wendland basis functions and 3) four multiresolution levels, 
the coarsest containing 3 Wendland basis functions. 

We parameterize the penalty matrix $\Lambda$ according to 
\begin{equation} \label{constantlambda}
  \Lambda_{ij} = \begin{cases} \lambda, & i \neq j, \\
  0, & i=j, \end{cases}
\end{equation}
allowing for free estimates of the marginal precision parameters. 
A 5-fold cross validation as described in Section \ref{lambda_estimate} 
is used to select 
a penalty matrix $\Lambda$ from eight equally spaced values from 0.005 to 0.1. 
The optimal value is then used with the full sample 
covariance $S$ in (\ref{QUICproblem}) to give a best guess $\hat Q$ for 
the simulated data. 
To validate our proposed estimation approach we report several summary statistics,
each averaged over the 30 trials: 
the Frobenius norm $\|\hat Q - Q \|_{F} / \|Q\|_{F}$, 
the Kullback-Leibler (KL) divergence 
$\text{tr}( \hat Q Q)- \log \det( \hat Q Q)  -  \ell$, 
the percentage of zeros in $Q$ that were missed by $\hat Q$, 
the percentage of nonzero elements in $Q$ that were missed by $\hat Q$, 
the estimated nugget effect $\hat {\tau}^2$, the true nugget effect $\tau^2$, and
the estimated and true negative log-likelihoods $f(\hat Q, \hat {\tau}^2)$ and $f(Q,\tau^2)$, where the function $f$ is defined in Section \ref{nug_estimate}.

\begin{table}[tp]
\caption{Simulation study results for the single level case.  Scores are 
  averaged over 30 independent trials.  Each column represents the number of 
  observation samples, number of basis functions, Frobenius norm, KL divergence, 
  percent of true zeros missed, percent of true nonzeros missed, estimated 
  nugget, true nugget, estimated negative log-likelihood and true negative log-likelihood.
    \label{tab:single.sim.study}}

\bigskip

\centering
\begin{tabular}{c|c||cccccccc}
  \hline
  \hline
  $n$ & $\ell$ & Frob & KL & $\%$MZ & $\%$MNZ & $\hat{\tau}^2$ & $\tau^2$
    & $f(\hat{Q}, \hat{\tau}^2)$ & $f(Q,\tau^2)$ \\ 
  \hline
  & 100 & 0.41 & 6.5 & 21.3\% & 5.3\% & 0.1 & 0.1 & -12678 & -12677 \\
  10000 & 225 & 0.49 & 25.9 & 19.9\% & 2.0\% & 0.1 & 0.1 & -12460 & -12450 \\
  & 400 & 0.72 & 113.4 & 4.7\% & 8.2\% & 0.1 & 0.1 & -12204 & -12220 \\
  \hline
  & 100 & 0.38 & 5.6 & 22.3\% & 5.6\% & 0.1 & 0.1 & -28889 & -28888 \\
  22500 & 225 & 0.43 & 20.8 & 21.1\% & 2.4\% & 0.1 & 0.1 & -28601 & -28591 \\
  & 400 & 0.65 & 82.4 & 4.9\% & 1.0\% & 0.1 & 0.1 & -28259 & -28277 \\
  \hline
  & 100 & 0.37 & 5.3 & 22.6\% & 5.6\% & 0.1 & 0.1 & -51631 & -51630 \\
  40000 & 225 & 0.41 & 18.7 & 21.7\% & 2.6\% & 0.1 & 0.1 & -51287 & -51277 \\
  & 400 & 0.61 & 69.9 & 4.9\% & 1.3\% & 0.1 & 0.1 & -50876 & -50896 \\
  \hline
\end{tabular}
\end{table}

\begin{table}[tp]
\caption{Simulation study results for the multiple level case.  Scores are 
  averaged over 30 independent trials.  Each column represents the number of 
  observation samples, number of basis functions, Frobenius norm, KL divergence, 
  percent of true zeros missed, percent of true nonzeros missed, estimated 
  nugget, true nugget, estimated negative log-likelihood and true negative log-likelihood.
  at the truth.
  \label{tab:mult.sim.study}}

\bigskip

\centering
\begin{tabular}{c|c||cccccccc}
  \hline
  \hline
  $n$ & $\ell$ & Frob & KL & $\%$MZ & $\%$MNZ & $\hat{\tau}^2$ & $\tau^2$
    & $f(\hat{Q}, \hat{\tau}^2)$ & $f(Q,\tau^2)$ \\ 
  \hline
  & 119 & 0.82 & 645 & 5.6\% & 6.8\% & 0.1 & 0.1 & -12866 & -12865 \\
  10000 & 234 & 0.89 & 2024.6 & 6.4\% & 3.7\% & 0.1 & 0.1 & -12733 & -12730 \\
  & 404 & 0.85 & 3789.9 & 0.07\% & 2.5\% & 0.1 & 0.1 & -12719 & -12721 \\
  \hline
  & 119 & 0.77 & 639.7 & 7.1\% & 6.4\% & 0.1 & 0.1 & -29105 & -29105 \\
  22500 & 234 & 0.85 & 2045.4 & 10\% & 3.3\% & 0.1 & 0.1 & -28935 & -28930 \\
  & 404 & 0.93 & 4276.9 & 0.08\% & 2.5\% & 0.1 & 0.1 & -28902 & -28907 \\
  \hline
  & 119 & 0.79 & 634.8 & 7.8\% & 6.0\% & 0.1 & 0.1 & -51866 & -51866 \\
  40000 & 234 & 0.84 & 1943.2 & 11.4\% & 3.0\% & 0.1 & 0.1 & -51663 & -51658 \\
  & 404 & 0.94 & 4438.1 & 0.09\% & 2.5\% & 0.1 & 0.1 & -51612 & -51619 \\
  \hline
\end{tabular}
\end{table}

\subsubsection{Comments} \label{sim_comments}

Tables \ref{tab:single.sim.study} and \ref{tab:mult.sim.study} contain 
results from this simulation study. 
We see that estimating the nugget effect $\tau^2$ by treating the process as 
stationary is quite accurate. Estimates under the multiresolution basis are 
clearly lackluster when compared to the single resolution counterpart. 
The Frobenius norm and KL divergence tend to increase with the size of $\ell$, but this is to be expected as the dimensions of the target precision matrix $Q$ grow in $\ell$. 
The percentage of zeros in $Q$ that are missed (i.e.\ nonzero) in 
$\hat Q$ drops sharply as $\ell$ increases to 400, but this is the consequence of a harsher penalty weight matrix selected in the cross validation scheme.
\subsection{Global basis}

Next, we consider a spatial basis defined globally, that is, without compact 
support. In particular, we set up a harmonic basis via the model
$\phi_i(\bs) = \cos(2\pi  \boldsymbol{\omega}_i^{\mathrm{T}} \mathbf{s})$ 
where the frequencies $\boldsymbol{\omega}_1,\dots,\boldsymbol{\omega}_\ell \in \mathbb{R}^2$ are all pairwise combinations of the form $(\frac{k}{\sqrt{n}}, \frac{j}{\sqrt{n}})$ for $k,j=1,\dots,\sqrt{\ell}-1$.
Given $n$ samples, the corresponding $n \times \ell$ basis matrix $\Phi$ has 
$(i,j)$th entry  $\cos ( 2\pi  \boldsymbol \omega_j^{\mathrm{T}} \bs_i )$. 

Whereas compactly supported basis functions laid on a grid suggest natural 
nearest-neighbor structures for $Q$, in the case of global basis functions 
it is not as clear what natural models might be.  We consider four 
traditional undirected graphical models from the literature: 
\begin{enumerate}
  \item 
    Random graph: Elements of $Q$ are randomly selected to be 1; they are 
    zero otherwise.
  \item 
    Cluster graph: The diagonal of $Q$ consists of approximately 
    $\ell/20$ block matrices, each (square) block randomly populated with 
    the number 1.
  \item 
    Scale-free graph: Generated from the algorithm of \cite{BA}. 
    The nonzero elements are again equal to 1.
  \item 
    Band graph: $Q$ is tridiagonal with its nonzero entries equal to 1.
\end{enumerate}
Figure \ref{fig:graph_structures} contains example precision matrices from 
each of these specifications. 
\begin{figure}[t]
  \centering
  \includegraphics[scale=.5]{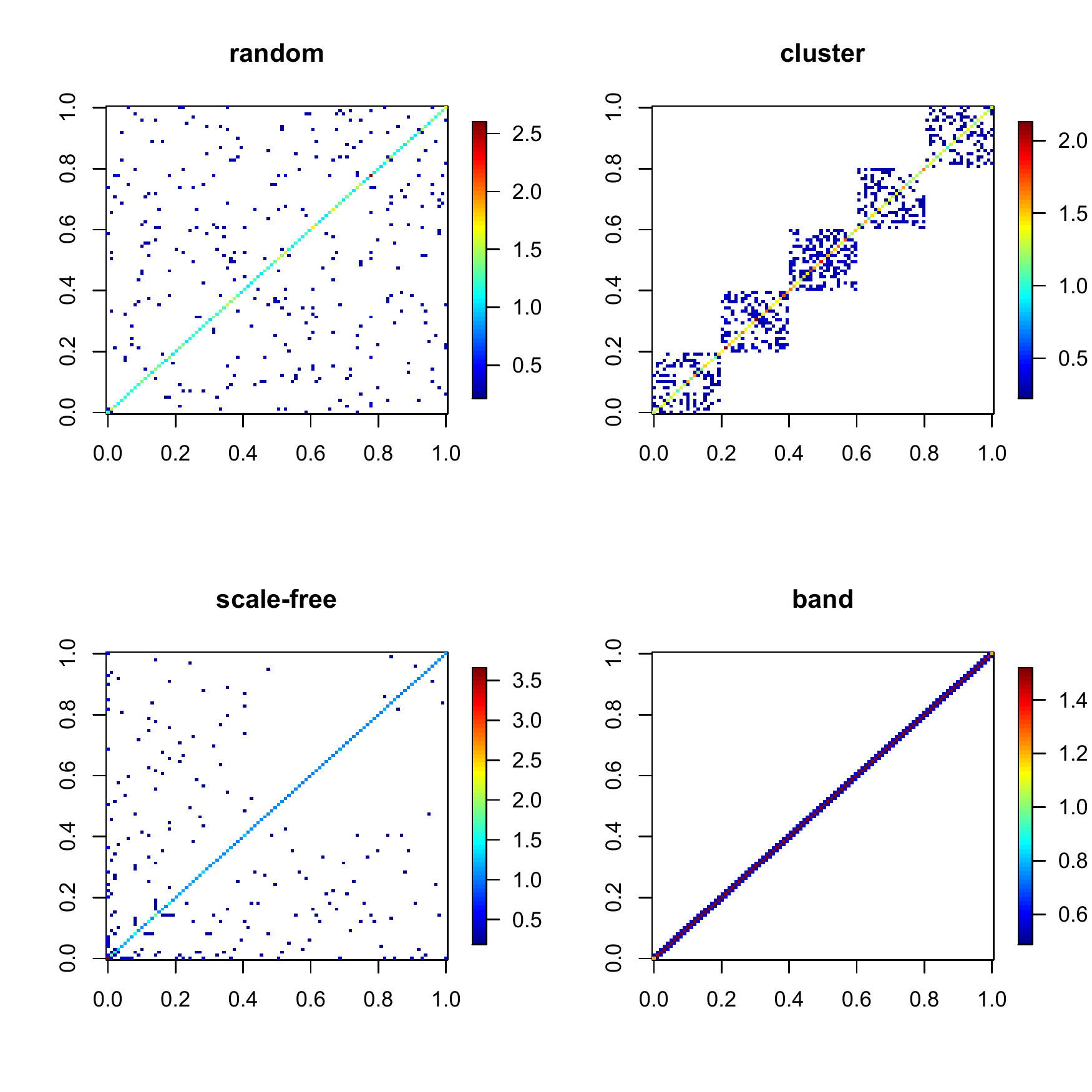}
  \caption{Illustration of the various graph structures of the precision matrix used in our simulation study.}
  \label{fig:graph_structures}
\end{figure}

Our experiment is similar to the local basis experiment: 
the $n$ spatial observation locations are randomly uniformly sampled from the 
square $[0,\sqrt{n}] \times [0, \sqrt{n}]$ where we entertain 
$n \in \{100^2, 150^2,200^2\}$ and $\ell \in \{100,225,400\}$. 
We use the \texttt{R}  package \texttt{huge} \citep{huge} to randomly generate 
the four enumerated precision matrices above. Each graph is generated using default parameters.
The same test statistics as reported in the previous section, again averaged 
over 30 trials, are recorded in Tables \ref{tab:band}-\ref{tab:scalefree}.

\begin{table}[tp]
\caption{Simulation study results for the Band graphical model.  Scores are 
  averaged over 30 independent trials.  Each column represents the number of 
  observation samples, number of basis functions, Frobenius norm, KL divergence, 
  percent of true zeros missed, percent of true nonzeros missed, estimated 
  nugget, true nugget, estimated negative log-likelihood and true negative log-likelihood.
  \label{tab:band}}

\bigskip

\centering

\begin{tabular}{c|c||cccccccc}
  \hline
  \hline
  $n$ & $\ell$ & Frob & KL & $\%$MZ & $\%$MNZ & $\hat{\tau}^2$ & $\tau^2$
    & $f(\hat{Q}, \hat{\tau}^2)$ & $f(Q,\tau^2)$ \\ 
  \hline
        & 100 & 0.17 & 1.5 & 8.5\% & 0\% & 5.0  & 5.0  & 26853 & 26854 \\
  10000 & 225 & 0.19 & 4.2 & 4.6\% & 0\% & 11.3 & 11.3 & 35553 & 35555 \\
        & 400 & 0.21 & 8.7 & 2.4\% & 0\% & 20.0 & 20.0 & 42079 & 42081 \\
  \hline
          & 100 &  0.17 & 1.5 & 8.5\% & 0\% & 5.1 & 5.1 & 59691 & 59692 \\
    22500 & 225 & 0.19 & 4.2 & 4.7\% & 0\% & 11.3 & 11.3 & 78562 & 78564 \\
          & 400 & 0.20 & 8.6 & 2.4\% & 0\% & 20.1 & 20.1 & 92401 & 92402 \\
  \hline
        & 100 & 0.17 & 1.5 & 8.6\% & 0\%  & 5.0 & 5.0 & 105563 & 105564 \\
  40000 & 225 & 0.19 & 4.2 & 4.7\% & 0\%  & 11.3 & 11.3 & 138598 & 138600 \\
        & 400 & 0.20 & 8.6 & 2.4\%  & 0\%  & 20.0 & 20.0 & 162575 & 162576 \\
  \hline
\end{tabular}
\end{table}

\begin{table}[tp]
\caption{Simulation study results for the Cluster graphical model.  Scores are 
  averaged over 30 independent trials.  Each column represents the number of 
  observation samples, number of basis functions, Frobenius norm, KL divergence, 
  percent of true zeros missed, percent of true nonzeros missed, estimated 
  nugget, true nugget, estimated negative log-likelihood and true negative log-likelihood.
  \label{tab:cluster}}

\bigskip

\centering
\begin{tabular}{c|c||cccccccc}
  \hline
  \hline
  $n$ & $\ell$ & Frob & KL & $\%$MZ & $\%$MNZ & $\hat{\tau}^2$ & $\tau^2$
    & $f(\hat{Q}, \hat{\tau}^2)$ & $f(Q,\tau^2)$ \\ 
  \hline
        & 100 & 0.26 & 3.1 & 16.3\% & 0\% & 5.1 & 5.1 & 26867 & 26868  \\
  10000 & 225 & 0.30 & 8.9 & 8.9\% & 0\% & 11.3 & 11.3 & 35573 & 35576 \\
        & 400 &  0.32 & 18 & 5\% & 0\% & 20.0 & 20.0 & 42115 & 42118  \\
  \hline
          & 100 & 0.26 & 3.1 & 17.1\% & 0\% & 5.1 & 5.1 & 59690 & 59691  \\
    22500 & 225 & 0.29 & 8.6 & 9.0\% & 0\% & 11.3 & 11.3 & 78567 & 78570  \\
          & 400 & 0.31 & 17.8 & 5.0\% & 0\% & 20.0 & 20.0 & 92420 & 92423  \\
  \hline
        & 100 &  0.26 & 3.1 & 17.5\% & 0\% & 5.1 & 5.1 & 105568 & 105570  \\
  40000 & 225 & 0.29 & 8.7 & 9.1\% & 0\%  & 11.3 & 11.3 & 138625 & 138628  \\
        & 400 & 0.31 & 17.6 & 5.1\%  & 0\%  & 20.0 & 20.0 & 162620 & 162624  \\
  \hline
\end{tabular}
\end{table}

\begin{table}[tp]
\caption{Simulation study results for the Random graphical model.  Scores are 
  averaged over 30 independent trials.  Each column represents the number of 
  observation samples, number of basis functions, Frobenius norm, KL divergence, 
  percent of true zeros missed, percent of true nonzeros missed, estimated 
  nugget, true nugget, estimated negative log-likelihood and true negative log-likelihood.
  \label{tab:random}}

\bigskip

\centering
\begin{tabular}{c|c||cccccccc}
  \hline
  \hline
  $n$ & $\ell$ & Frob & KL & $\%$MZ & $\%$MNZ & $\hat{\tau}^2$ & $\tau^2$
    & $f(\hat{Q}, \hat{\tau}^2)$ & $f(Q,\tau^2)$ \\ 
  \hline
        & 100 & 0.19 & 1.7 & 9.6\% & 0\% & 5.1 & 5.1 & 26872 & 26873  \\
  10000 & 225 & 0.20 & 4.8 & 5.2\% & 0\% & 11.3 & 11.3 & 35586 & 35589 \\
        & 400 &  0.22 & 9.8 & 2.6\% & 0\% & 20.0 & 20.0 & 42134 & 42136 \\
  \hline
        & 100 & 0.19 & 1.8 & 9.7\% & 0\% & 5.1 & 5.1 & 59697 & 59698  \\
    22500 & 225 & 0.20 & 4.8 & 5.2\% & 0\% & 11.3 & 11.3 & 78581 & 78584  \\
        & 400 & 0.22 & 9.7 & 2.6\% & 0\% & 20.0 & 20.0 & 92441 & 92443  \\
  \hline
        & 100 &  0.19 & 1.7 & 9.8\% & 0\%  & 5.1 & 5.1 & 105585 & 105586  \\
  40000 & 225 & 0.20 & 4.7 & 5.2\%  & 0\%  & 11.3 & 11.3 & 138643 & 138645  \\
        & 400 & 0.22 & 9.7 & 2.6\%  & 0\%  & 20.0 & 20.0 & 162640 & 162642 \\
  \hline
\end{tabular}
\end{table}

\begin{table}[tp]
\caption{Simulation study results for the Scale-free graphical model.  Scores are 
  averaged over 30 independent trials.  Each column represents the number of 
  observation samples, number of basis functions, Frobenius norm, KL divergence, 
  percent of true zeros missed, percent of true nonzeros missed, estimated 
  nugget, true nugget, estimated negative log-likelihood and true negative log-likelihood.
  \label{tab:scalefree}}

\bigskip

\centering
\begin{tabular}{c|c||cccccccc}
  \hline
  \hline
  $n$ & $\ell$ & Frob & KL & $\%$MZ & $\%$MNZ & $\hat{\tau}^2$ & $\tau^2$
    & $f(\hat{Q}, \hat{\tau}^2)$ & $f(Q,\tau^2)$ \\ 
  \hline
        & 100 & 0.21 & 1.4 & 6.1\% & 0\% & 5.1 & 5.1 & 26879 & 26879 \\ 
  10000 & 225 & 0.21 & 3.5 & 2.8\% & 0\% & 11.3 & 11.3 & 35604 & 35605 \\ 
        & 400 &  0.21 & 6.3 &  2.6\% & 0\% & 20.0 & 20.0 & 42167 & 42172 \\ 
  \hline
        & 100 & 0.20 & 1.4 & 5.9\% & 0\% & 5.1 & 5.1 & 59698 & 59699  \\ 
  22500 & 225 & 0.21 & 3.5 & 2.8\% & 0\% & 11.3 & 11.3 & 78598 & 78599 \\ 
        & 400 & 0.21 & 6.3 & 2.7\% & 0\% & 20.0 & 20.0 & 92471 & 92476 \\ 
  \hline
        & 100 &   0.20 & 1.3 & 6.1\% & 0\%  & 5.1 & 5.1 & 105597 & 105598 \\ 
  40000 & 225 & 0.21 & 3.5 & 2.8\%  & 0\%  & 11.3 & 11.3 & 138661 & 138662 \\ 
        & 400 & 0.21 & 6.2 & 2.7\%  & 0\%  & 20.0 & 20.0 & 162671 & 162676 \\ 
  \hline
\end{tabular}
\end{table}

\subsubsection{Comments}

Tables \ref{tab:band}, \ref{tab:cluster}, \ref{tab:random}, and 
\ref{tab:scalefree} contain the results of the global basis simulation study. 
As in the compactly supported basis study we note the same behavior in 
$\hat{\tau}^2$ in that the independent identical coefficient assumption (constant diagonal $Q$) 
yields robust estimates of $\tau^2$. 
The cluster graph appears highest in Frobenius norm and KL divergence, but this 
should not be surprising given the fact that there are more nonzero elements in 
the cluster graph than its counterparts and we are searching for a sparse 
estimate. The percentage of missed zeros and missed nonzeros in $Q$ also 
behave similarly to the local basis study with respect to the dimension $\ell$. 
Overall the proposed method seems to supply reasonable estimates of the zero 
structure of the precision matrix, as well as its non-zero values.

\subsection{Changing the Ensemble Size and the Noise-to-Signal Ratio}

We consider a brief study to assess the effect of ensemble size, or 
field realizations, on the algorithm's ability to recover $Q$. In the prior 
section, we fixed the ensemble size at $m=500$. Here we fix $\ell=100$ to ease 
computation times but vary the ensemble size according to 
$m \in \{50,100,200,500,1000\}$ and number of spatial locations according to 
$n \in \{100^2, 150^2, 200^2, 250^2, 300^2 \}$. The same Wendland basis and 
precision matrix $Q$ as in the local basis study (\ref{localbasisstudy}) are used 
to generate $m$ realizations of the additive model (\ref{lineargaussianmodel}) 
where we have observations at $n$ uniformly randomly sampled locations in 
$[0,1] \times [0,1]$. We record the Frobenius norm $\|\hat Q - Q \|_F / \|Q\|_F$, the KL divergence $\text{tr}( \hat Q Q)- \log \det( \hat Q Q)  -  \ell$,
and the percentage of zeros in $Q$ that $\hat{Q}$ fails to capture. The penalty parameter is fixed at $\lambda = 0.005$, the value which was favored in our previous simulations when $(\ell,m,n) = (100,500,10000)$.

\begin{figure}[t]
  \centering
  \includegraphics[scale=.41]{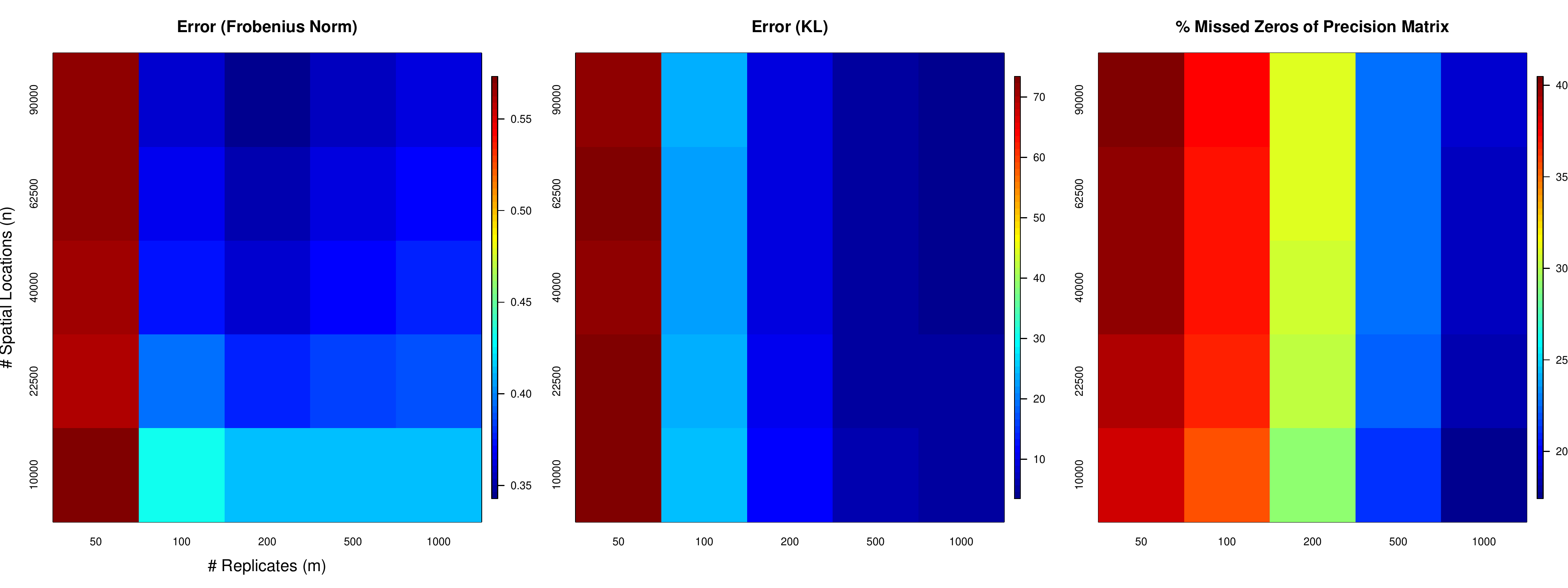}
  \caption{Results of a small simulation study where we fix the basis size $\ell$ but 
  vary sample size $m$ and number of spatial locations $n$. The penalty matrix is fixed across each pair $(n,m)$.}
  \label{fig:sim.ensemble}
\end{figure}

Figure \ref{fig:sim.ensemble} shows results from this study. The plots suggest that the number of replicates $m$ has a prominent effect on our summary statistics, 
while the influence of the spatial dimension $n$ is more subtle.  The Frobenius norm and KL divergence decrease as $n$ increases but much more noticeably decrease when
$m$ increases, with the effects less pronounced beyond the $m=50$ range. 
The percentage of missed zeros in $\hat{Q}$ behaves similarly, but the accuracy does not rise so sharply beyond $m=50$. 

We also conducted a small experiment where we changed the noise-to-signal ratio, which has been kept at $0.1$ up to this point, under a fixed set of basis functions and precision matrix $Q$. The penalty matrix was also kept constant across the various noise levels. Increasing the ratio from $0.1$ to $0.25$ slightly worsened the Frobenius norm and KL divergence, but the jump from $0.25$ to $0.5$ demonstrated a substantial decrease in ability to accurately capture the true precision matrix. At a noise-to-signal ratio of 0.5, however, the resulting model is extremely noisy and expecting accurate estimates is not entirely reasonable.

Finally, we note that the estimation procedure for $\hat{\tau}^2$ in Section \ref{nug_estimate} remained extremely accurate throughout all the modifications in this section.

\section{Data Analysis} \label{data}

\subsection{Reforecast Data} \label{GEFS}

The Global Ensemble 
Forecast System (GEFS) from the National Center for Environmental Prediction (NCEP) provides an 11 member daily reforecast of climate variables 
available from December 1984 to present day. We take all readings from 
January of each year through 2018, giving a total of $m=1054$ global fields 
of the two meter temperature variable. Measurements were recorded at each integer 
valued longitude and latitude combination, totaling $n = 65160$ spatial locations. 
We use a set of Wendland basis functions spread over the globe 
in a way that ensures that the $\ell = 2531$ centers are equispaced with respect 
to the great circle distance. See Figure \ref{fig:gefs_day1} for an illustration 
of the data and basis functions.  We opt for 2531 as it provides a reasonably dense network of 
basis functions that still allows for computational tractability. 

\begin{figure}[t]
  \centering
  \includegraphics[scale=.4]{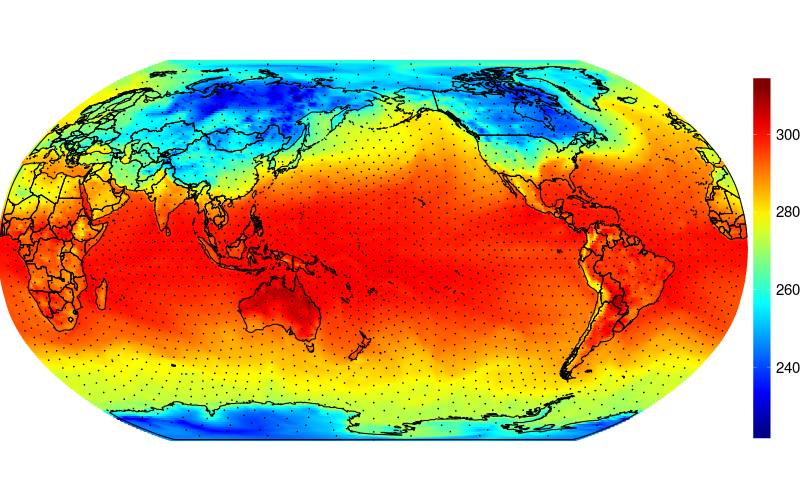}
  \caption{Two meter temperature on January 1, 1984. Dots indicate nodal points.}
    \label{fig:gefs_day1}
\end{figure}
%

Throughout this section, we work with temperature anomalies, that is, residuals 
after removing a spatially-constant mean of approximately 276$^\circ$ K from the data matrix. 
The technique introduced in Section \ref{nug_estimate} is used to estimate the 
nugget effect $\hat{\tau}^2 = 1.74$. 

An interesting idea when using localized basis functions with a notion of 
distance between them is to adjust the penalty matrix so that neighbors are 
encouraged to remain in close proximity to the center point. In particular, 
we make $\Lambda$ proportional to the distance matrix of the centers of the 
basis functions. This idea was pursued in \citep{davanloo} but in the context 
of direct spatial observations with the graphical lasso rather than working 
through basis functions with our DC algorithm. 

We consider $\Lambda = \lambda D$ where $D$ is the pairwise distance matrix 
of the nodal points registering the Wendland basis functions. 
To select the penalty parameter $\lambda$, we use two-fold likelihood-based 
cross-validation as described in Section \ref{lambda_estimate} for 
values 
$\lambda \in \{0.0001,0.0005,\linebreak 0.001,0.005,0.01,0.05,0.1,0.5\}$. 
Smaller values than $0.0001$ were examined but 
failed to converge after a reasonable runtime. Despite the fact that it sits 
on the boundary of our parameter set, the value $\lambda = 0.0001$ was chosen 
as at yielded a significant drop in negative log-likelihood when compared to 
larger penalties. We apply the DC algorithm (\ref{QUICproblem}) a final time with the full sample covariance matrix and the $\Lambda$ favored by cross-validation.
\begin{figure}[t]
  \centering
  \includegraphics[scale=.4]{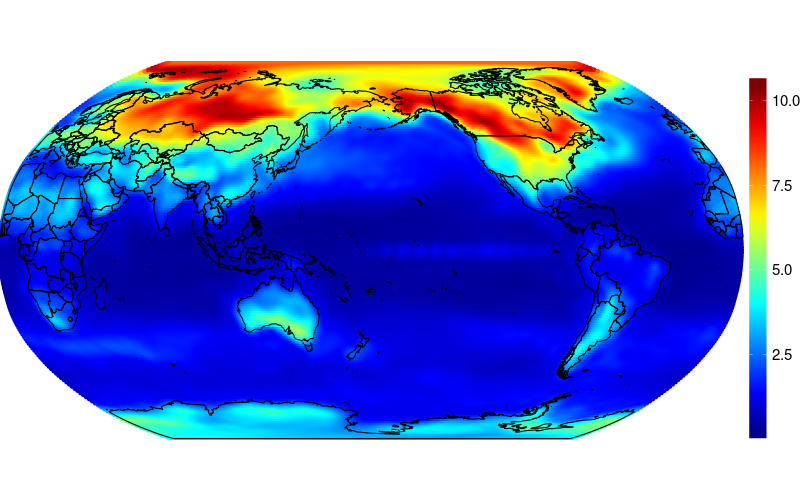}
  \caption{Estimated pointwise standard deviations under the model (\ref{basis}) 
  using $Q$ estimated from the DC algorithm. Units are degrees Celsius.}
  \label{fig:gefs_standarderrors}
\end{figure}

Figure \ref{fig:gefs_standarderrors} contains a global plot of the implied 
estimated local standard deviations. Note similar behavior in Figure 4 
of \cite{Legates1990} which depicts standard deviations for mean air 
surface temperature over the globe. Both plots illustrate a clear 
land-ocean difference and increased variability in higher latitudes where 
the overall land area is greater.

Figure \ref{fig:gefs_spatialcorrelation} shows a plot of estimated spatial 
covariance functions centered at three different geographical locations 
in the southern tip of South America, the Middle East, and central North America. 
There is clear evidence of nonstationarity in all three cases, as well as 
negative correlation at medium distances.  An interesting feature of the 
estimated covariance structure is the negative correlation between Alaska and 
the central United States, indicative of medium-range teleconnections, and 
may be a result of Rossby waves that occur during winter in the northern hemisphere.


\begin{figure}[t] \label{fig:gefs_spatialcorrelation}
  \minipage{0.33\textwidth}
  \includegraphics[width=\linewidth]{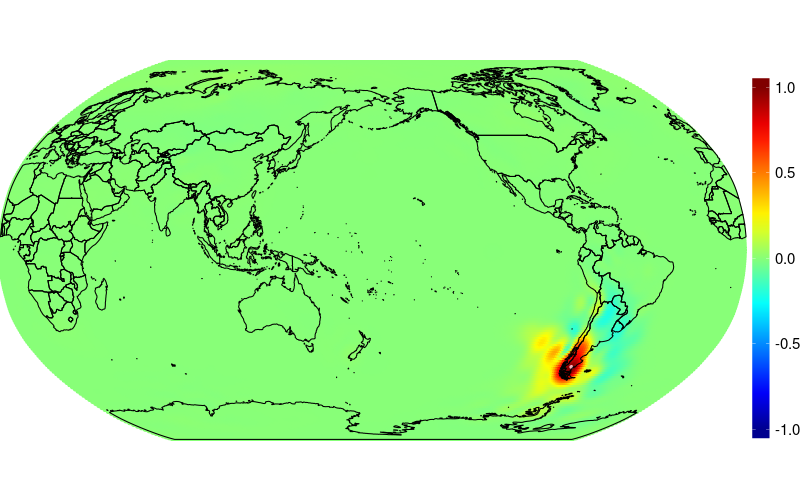}
  \endminipage\hfill
  \minipage{0.33\textwidth}
  \includegraphics[width=\linewidth]{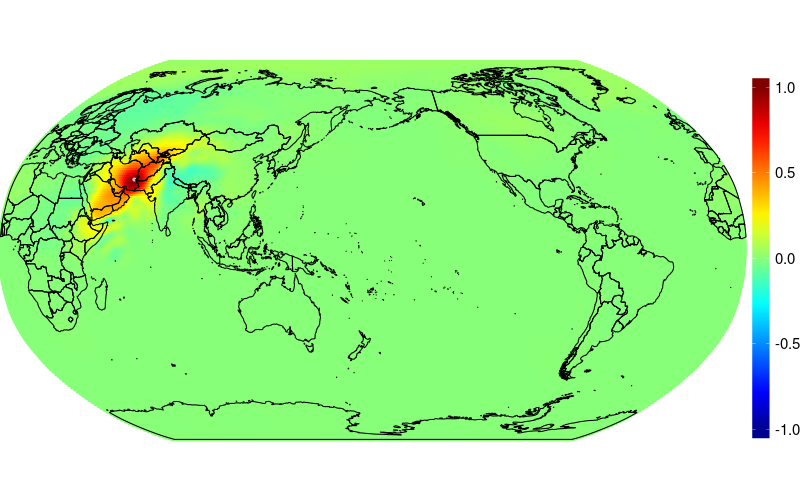}
  \endminipage\hfill
  \minipage{0.33\textwidth}%
  \includegraphics[width=\linewidth]{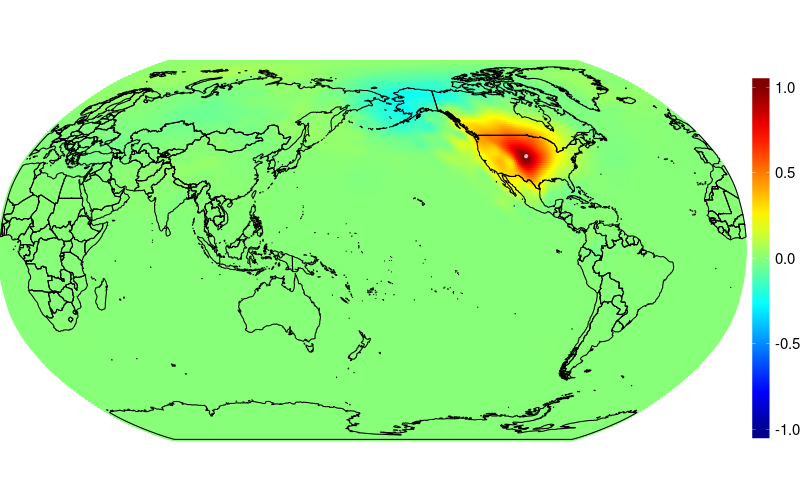}
  \endminipage
  \caption{Estimated spatial correlation functions centered at a point in southern 
  South America (left), central Middle East (middle) and central North America 
  (right).}
  \label{fig:gefs_spatialcorrelation}
\end{figure}

An interesting byproduct of our model is that we can examine the GMRF 
neighborhood structure of the estimated precision matrix. Recall that each 
random coefficient $c_i$ in our model is registered to a nodal grid shown 
in Figure \ref{fig:gefs_day1}, and thus we can identify estimated spatial 
neighborhood patterns according to this grid. 
We illustrate some of these neighborhood structures, colored by their 
respective $Q$ values in Figure \ref{fig:gefs_neighbors}. 
For nodal points over the ocean, the tendency is large, positive precision 
values with few neighbors. For nodal points over land, we observe that the 
most significantly nonzero neighbor elements are geographically near the 
center node, partly due to our choice for $\Lambda$.  There is sometimes 
evidence of neighborhoods that spread throughout the globe, although the 
magnitude of the corresponding precision matrix entries is typically very small, 
e.g., the lower right panel of Figure \ref{fig:gefs_neighbors}.
\begin{figure}[t] 
  \label{fig7} 
  \begin{minipage}[b]{0.5\linewidth}
    \centering
    \includegraphics[width=.95\linewidth]{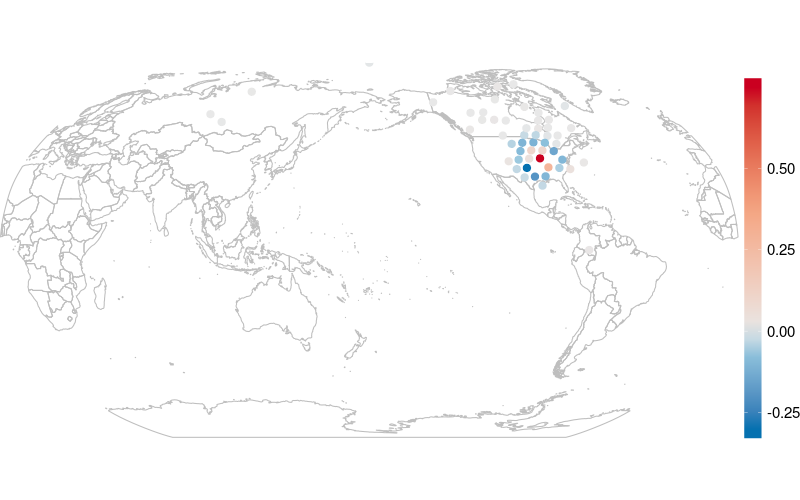} 
  \end{minipage}
  \begin{minipage}[b]{0.5\linewidth}
    \centering
    \includegraphics[width=.95\linewidth]{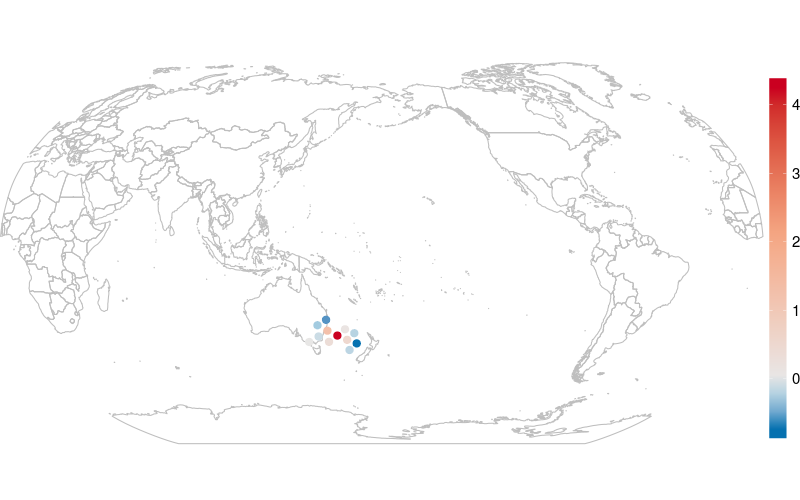} 
  \end{minipage} 
  \begin{minipage}[b]{0.5\linewidth}
    \centering
    \includegraphics[width=.95\linewidth]{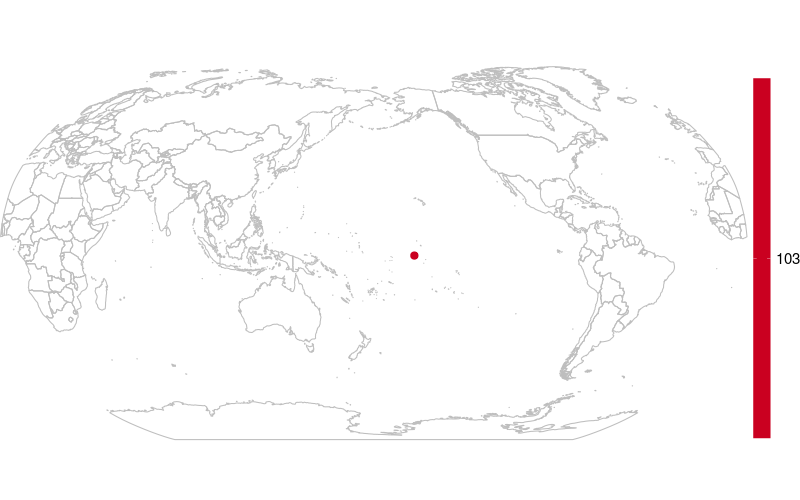} 
  \end{minipage}
  \begin{minipage}[b]{0.5\linewidth}
    \centering
    \includegraphics[width=.95\linewidth]{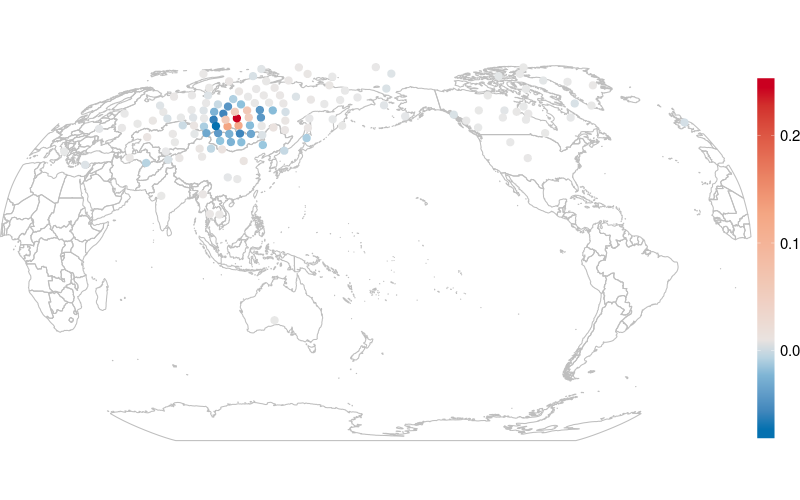} 
  \end{minipage} 
  \caption{The estimated neighborhood structure of $Q$ registered at 
  the nodal points of the basis functions.  Colors correspond to neighborhood 
  entries in $Q$ with a center point that is clockwise starting top left: 
  over U.S.; in Pacific Ocean near Australia; over Russia and over the Pacific 
  Ocean near the equator.}
  \label{fig:gefs_neighbors}
\end{figure}

\subsection{TopoWx}

The Topography Weather (TopoWx) dataset \citep{oyler2015} contains 
observed 2 m temperatures from a set of observation networks over the continental 
United States.  
We consider daily minimum temperatures during the month of June from 2010 to 2014, 
giving a total of $m=150$ replicates. Network locations are chosen to have no 
missing values, yielding $n = 4577$ spatial locations. Figure \ref{fig:topo_basis} 
shows an example day of data on June 1, 2010.
\begin{figure}[t]
  \centering
  \includegraphics[scale=.4]{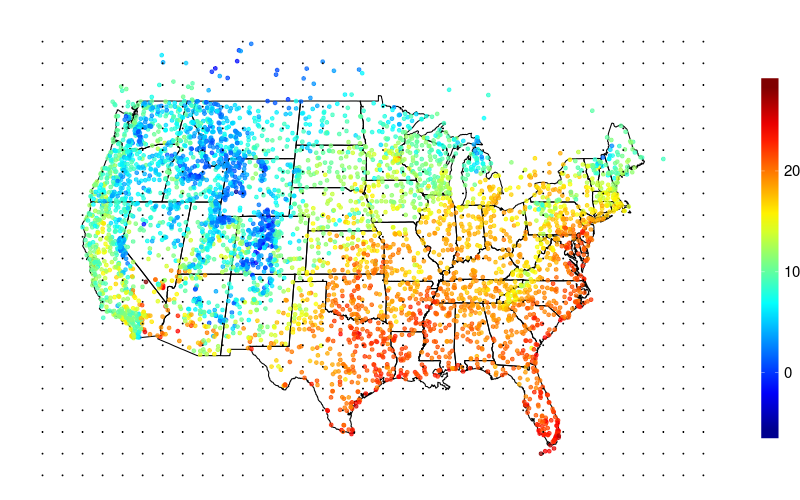}
  \caption{Minimum temperature on June 1, 2010, overlaid with a grid of 
  basis functions nodes.}
  \label{fig:topo_basis}
\end{figure}

The dataset includes an elevation covariate, and we work with minimum temperature 
residuals after regressing out a mean function linear in longitude, latitude 
and elevation. We transform the raw spatial coordinates with a sinusoidal 
projection. 
Our statistical model for the temperature residuals uses Wendland basis 
functions centered at nodes displayed in Figure \ref{fig:topo_basis}. We opt for 
$\ell=1160$ functions using a single level of resolution. The nodal grid and 
Wendland functions are chosen to match up with a LatticeKrig model specification, 
but with relaxed assumptions on the precision matrix governing the random 
coefficients. 

The nugget estimate $\hat{\tau}^2 = 2.18$  is retrieved as in Section \ref{nug_estimate}. 
As with the previous dataset, the penalty matrix $\Lambda$ is 
parameterized according to $\Lambda = \lambda D$, where $D$ is the distance 
matrix of node points. We selected scaling parameters 
from $\lambda \in \{ \frac{1}{10},\frac{2}{10},\dots,\frac{9}{10},1,
\frac{10}{9},\dots,\frac{10}{2},10 \} \cup \{ 10,12.5,\dots,27.5, 30 \}$ 
using a simple prediction-based cross validation scheme in Section 
\ref{lambda_estimate}; the latter set in the union was considered to check 
the behavior of the final estimate $\lambda = 10$ near the boundary. 
The resulting penalty matrix is used with $\hat{\tau}^2$ and the full 
sample covariance $S$ to obtain a final estimate of the precision matrix. 

Figures \ref{fig:topo_1level_neighborhood} and \ref{fig:topo_1level_correlation}  
show graphical model neighborhoods and estimated correlation functions  
centered at locations in Utah and Kansas.  Clear anisotropy 
and nonstationarity is present in the estimated correlation functions with 
greater north-south directionality of correlation, while the neighborhood 
structure for the Utah nodal point displays greater complexity than the relatively 
nearby neighbors of the nodal point in the midwest.

\begin{figure}[t] 
  \label{ fig7} 
  \begin{minipage}[b]{0.5\linewidth}
    \centering
    \includegraphics[width=.95\linewidth]{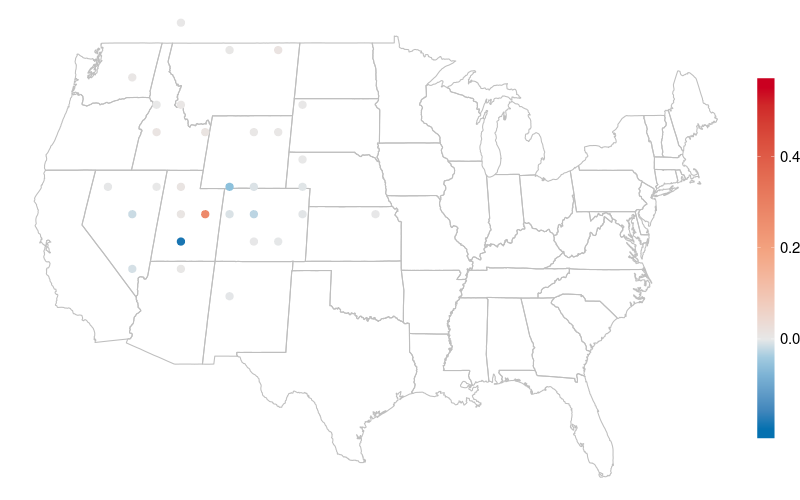} 
  \end{minipage}
  \begin{minipage}[b]{0.5\linewidth}
    \centering
    \includegraphics[width=.95\linewidth]{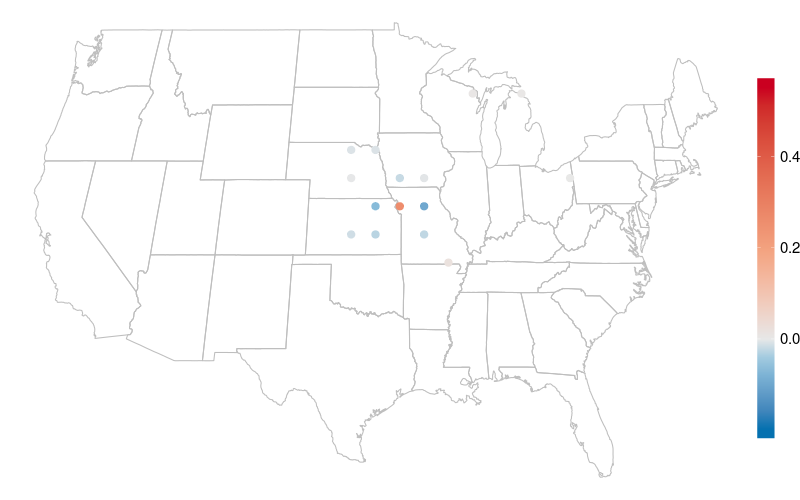} 
  \end{minipage} 
  \caption{Displaying the neighborhood structure of $Q$, where the two center 
  points are located in central Utah and near Kansas City. Neighbors are 
  colored according to the corresponding nonzero elements in $Q$.}
  \label{fig:topo_1level_neighborhood}
\end{figure}
\begin{figure}[t] 
  \label{ fig7} 
  \begin{minipage}[b]{0.5\linewidth}
    \centering
    \includegraphics[width=.95\linewidth]{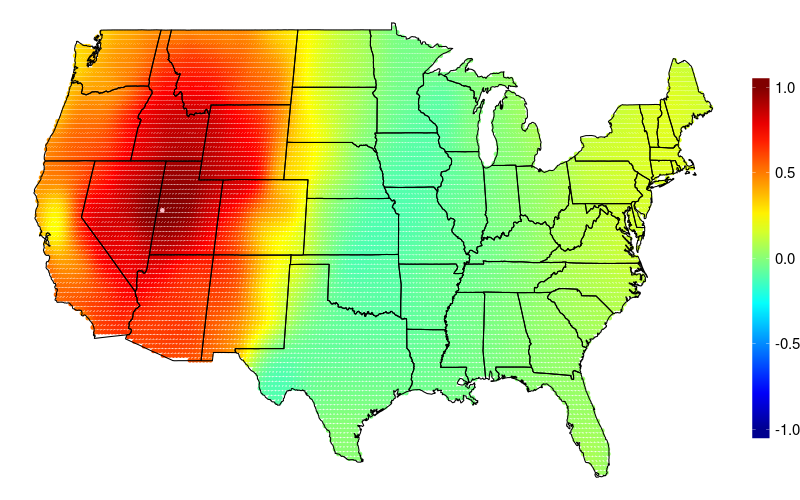} 
  \end{minipage}
  \begin{minipage}[b]{0.5\linewidth}
    \centering
    \includegraphics[width=.95\linewidth]{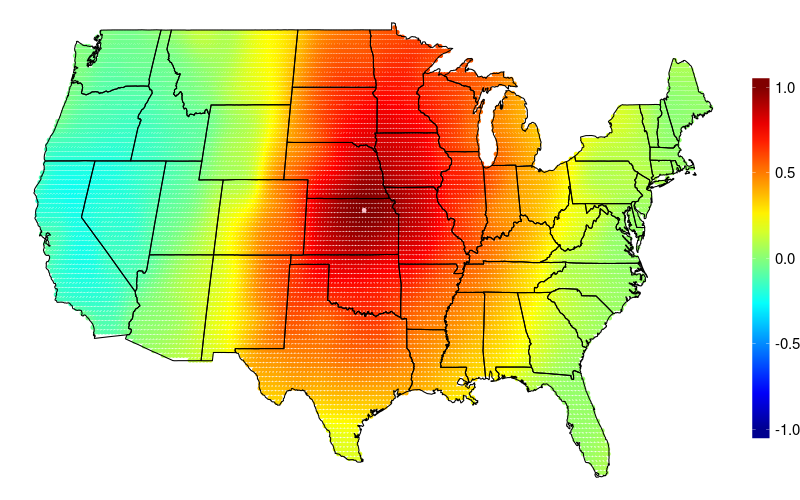} 
  \end{minipage} 
  \caption{Estimated spatial correlation functions centered at the pink-colored locations 
  in Utah and Kansas.}
  \label{fig:topo_1level_correlation}
\end{figure}

Due to lack of data availability over the ocean there is an identifiability 
problem with our method. Since several of the Wendland basis functions lie over 
the ocean where there is no observed data, we cannot expect the algorithm to give
reasonable estimates for the diagonal elements of $Q$ corresponding 
to those nodes. Moreover, the diagonal of the penalty  matrix $\Lambda$ is 
identically zero, and thus the corresponding diagonal elements of $Q$ remained 
unchanged no matter the initial guess $Q_0$, which we fixed at $Q_0 = I_\ell$.

Now we entertain a comparison against the corresponding LatticeKrig model using 
the same nodal grid and same Wendland basis functions, but with the spatial 
autoregressive precision matrix of LatticeKrig. We estimate LatticeKrig 
parameters by maximum likelihood within the \texttt{LatticeKrig} package in 
\texttt{R}. In particular, the estimated nugget variance $\hat{\tau}^2 = 4.7$ is about twice as large as that from our model, and the central \texttt{a.wght} parameter is estimated at 5.41. The smoothness parameter $\nu$ in the LatticeKrig setup is set at $0.5$ which is a typical assumption for observational temperature data.

We compare the two models based on cross-validation prediction accuracy 
and standard Akaike information criterion. 
For 400 randomly held-out locations, we calculate predictive 
squared error (MSE) and continuous ranked probability score (CRPS) \citep{CRPS} 
based on the standard kriging predictor separately for each of 150 days of data. 
A nice consequence of the spatial 
basis model (\ref{basis}) is that the formulas for kriging predictors and 
kriging variances can be evaluated in $\mathcal{O}(n \ell^2)$ operations rather 
than the naive $\mathcal{O}(n^3)$; see e.g.\ \citep{cressie2008} for details. 
The $400 \times 150 = 60000$ MSE and CRPS values are then averaged into single 
summary statistics. For the proposed model, the MSE and CRPS scores are 4.99 
and 3.73, respectively, while the values are 5.02 and 3.64 for 
LatticeKrig. The similarity of these statistics between models should not 
come as a surprise given the density of spatial observations.

Direct likelihood comparisons between our proposal and LatticeKrig is not 
fair due to the high number of free parameters in our model. For 
spatial processes, the number of degrees of freedom of the model can be 
identified with the trace of the spatial smoothing hat matrix \citep{nychka2000}. 
Using this estimate of effective degrees of freedom as the number of parameters, 
our model has AIC of 9043 while LatticeKrig has AIC of 12615, indicating 
substantial improvement in model fit using our proposal.

\subsubsection{Including Global Basis Functions}

In the previous section, we removed a mean trend from our data by regressing 
out longitude, latitude and elevation. Notice, however, that we could simply 
choose to include for example, elevation, as a global basis function in the 
stochastic part of the spatial model (\ref{basis}). In light of this viewpoint, 
we regress out a mean trend on only longitude and latitude and use the same 
single resolution Wendland basis as before (with $\ell = 1160$) but also along 
with elevation as a global basis function. The penalty matrix $\Lambda$ is 
selected from a prediction-based cross validation over parameters 
$(\lambda,\gamma)$, where the upper-left left block of $\Lambda$ reflects 
distance between the Wendland bases via $\lambda D$, but now the 
$1161^{st}$ row/column of $\Lambda$ is held at a constant $\gamma$ since 
there is no natural notion of geographical 
distance between our global and locally registered basis functions. 

By way of illustration, Figure \ref{fig:newbasis_correlation} shows 
estimated spatial correlations using elevation in the stochastic part of the model, 
which has a noticeable effect. In particular, there are clear estimated spatial 
correlation patterns with the western point that interact with the nontrivial 
topography of the western US. The spatial correlation structure in the eastern 
US is much more homogeneous, reflecting the comparative lack of complex terrain.

An interesting byproduct of including elevation as a globally defined predictor, 
while the rest of the stochastic basis functions are compactly supported and 
registered to a grid, is that we can examine the nodal neighbors of the 
elevation coefficient. To be precise, this corresponds to the first 1160 elements of the $1161^{st}$ row/column of $Q$, as shown in 
Figure \ref{fig:newbasis_elevation_neighbor}. 
The neighbors, especially those that are substantially different from zero, 
are concentrated in the western US and over the Appalachian Mountains in the 
east. 
\begin{figure}[t] 
  \label{fig7} 
  \begin{minipage}[b]{0.5\linewidth}
    \centering
    \includegraphics[width=.95\linewidth]{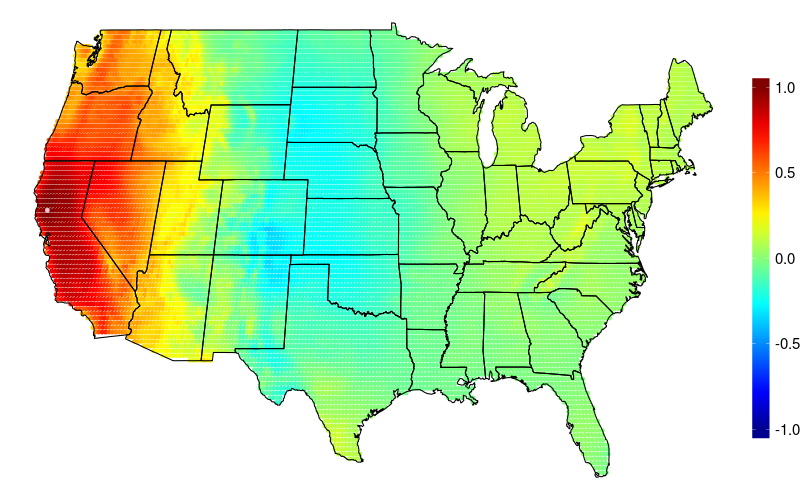} 
  \end{minipage}
  \begin{minipage}[b]{0.5\linewidth}
    \centering
    \includegraphics[width=.95\linewidth]{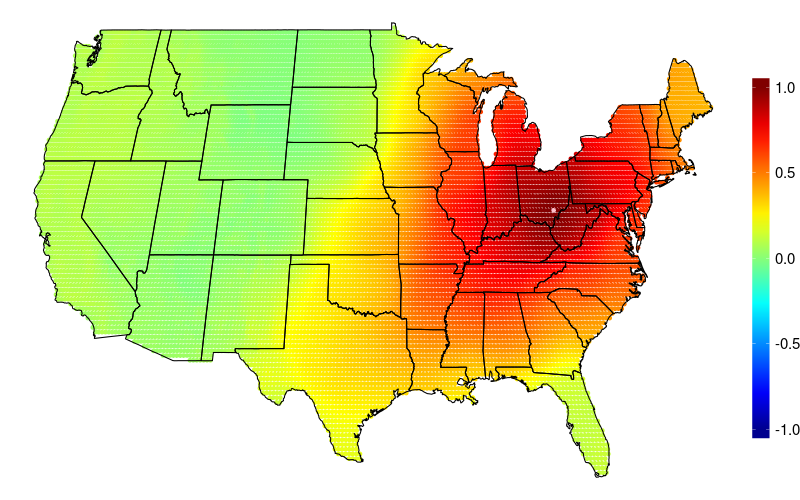} 
  \end{minipage} 
  \caption{Estimated spatial correlation functions of the process registered at a 
  spatial point colored in pink using elevation as basis function in the stochastic 
  model.}
  \label{fig:newbasis_correlation}
\end{figure}
\begin{figure}[t]
  \centering
  \includegraphics[scale=.4]{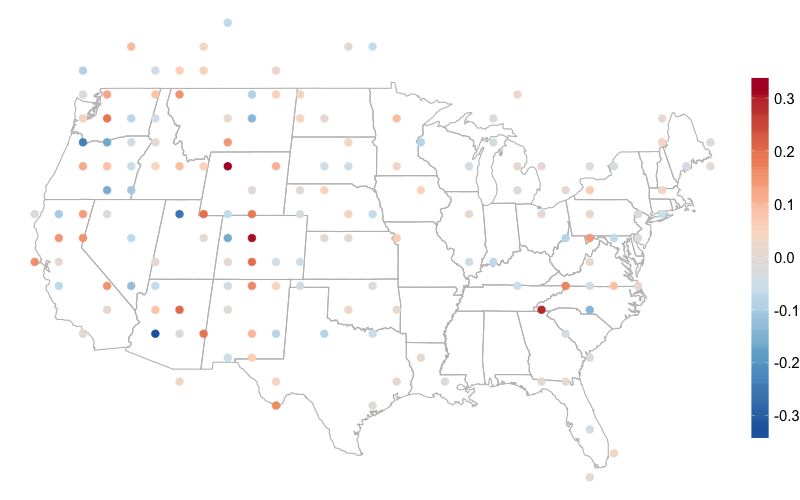}
  \caption{The Wendland coefficient graphical neighbors of the elevation coefficient.}
  \label{fig:newbasis_elevation_neighbor}
\end{figure}

\section{Conclusion}

In this work we introduce a novel approach for estimating the precision matrix 
of the random coefficients of a basis representation model that is pervasive 
in the spatial statistical literature. The only assumption we enforce is that 
the precision matrix is sparse.  In the case that the basis functions are registered 
to a grid, the precision entries can be interpreted as a spatial Gaussian 
Markov random field, while graphical model interpretations are still viable 
with global bases. 

The estimator minimizes a penalized log likelihood, and we show that the 
optimization problem is equivalent to one involving a sum of a convex and 
concave functions, which suggests a DC-algorithm in which 
we iteratively linearize the concave part at the previous guess and 
solve the resulting convex problem. 
The linearization in our case gives rise to a graphical lasso problem with 
its ``sample covariance'' depending upon the previous guess. The graphical 
lasso problem is well studied and a number of user-friendly \texttt{R} 
packages exist, headed by the second order method \texttt{QUIC}. 
Our method has important practical applications in spatial data analysis, 
since we obtain a nonparametric, penalized maximum likelihood estimate 
of $Q$ which can subsequently be used in kriging or simulation 
with computational complexity $\mathcal{O}(n\ell^2)$ under the basis model. 

In our data examples we see that the proposed method performs competitively 
with existing alternatives such as LatticeKrig, but substantially improves 
information criteria. Moreover, our model results in highly interpretable 
fields, allowing for checking of graphical neighborhood structures, or 
implied nonstationary covariance functions.  Future work may be directed 
toward other penalties, or pushing these notions though to space-time modeling.

\bibliographystyle{plainnat}

\section{Appendix}

We start the Appendix with the proof of Proposition \ref{prop:likelihood}.  
\begin{proof}[Proof of Proposition \ref{prop:likelihood}]
  For matrices $A, U, C$ and $V$ of appropriate size, the Sherman Woodbury 
  Morrison formula is 
  \[
    \left(A+UCV \right)^{-1} = A^{-1} - 
    A^{-1}U \left(C^{-1}+VA^{-1}U \right)^{-1} VA^{-1}
  \]
  and the matrix determinant lemma is 
  \[
    \det(A + U C V) = \text{det}(C^{-1} + V A^{-1} U) \ \text{det}(C) \ \text{det}(A).
  \]
  In our case, these two equations read
  \begin{equation} \label{ourSMW}
    \left( \Phi Q^{-1} \Phi^{\mathrm{T}} + \tau^2 I_n \right)^{-1} 
    = \frac{1}{\tau^2} I_n 
    - \frac{1}{\tau^4} \Phi \left(  Q + \frac{1}{\tau^2} \Phi^{\mathrm{T}} 
    \Phi \right) ^{-1} \Phi^{\mathrm{T}}
  \end{equation}
  and 
  \begin{equation} \label{ourMDL}
    \det( \Phi Q^{-1} \Phi^{\mathrm{T}}+\tau^2 I_n ) 
    = \det \left(Q + \frac{ 1}{ \tau^2}\Phi^{\mathrm{T}} \Phi \right) 
    \det(Q^{-1}) \det(\tau^2 I_n).
  \end{equation}
  Combining (\ref{ourSMW}) with linearity and the cyclic property of trace gives
  \[
    \text{tr}(S (\Phi Q^{-1} \Phi^{\mathrm{T}} + \tau^2 I_n)^{-1} ) 
    = \frac{1}{\tau^2}\text{tr}(S) - 
  \text{tr} \left(   \frac{1}{\tau^4} \Phi^{\mathrm{T}} S \Phi 
    \left(  Q + \frac{1}{\tau^2} \Phi^{\mathrm{T}} \Phi \right) ^{-1}  \right),
  \]
  and taking the logarithm of (\ref{ourMDL}) immediately yields 
  \[
    \log \det  ( \Phi Q^{-1} \Phi^{\mathrm{T}} + \tau^2 I_n) 
    = \log \det \left(Q + \frac{ 1}{ \tau^2}\Phi^{\mathrm{T}} \Phi \right) 
    - \log \det Q  + n \log \tau^2. \hfill \qedhere
  \]
\end{proof}

\subsection*{Convexity, Gradients, and Hessians}

The penalized likelihood in Proposition \ref{prop:likelihood} was as follows:
\[
 \log \det  \left(Q + \frac{ 1 }{ \tau^2} \Phi^T \Phi \right)  - \log \det (Q)  - \text{tr} \left( \dfrac{1}{\tau^4} \Phi^{\mathrm{T}}  S \Phi   \left( Q + \frac{1}{\tau^2} \Phi^{\mathrm{T}} \Phi \right)^{-1}  \right) + \| \lambda \circ Q\|_1
\]
Let us explain the classifications of convexity and concavity stated in the opening paragraph of Section \ref{algorithmsection}. The penalty function $Q \mapsto \| \Lambda \circ Q \|_1$ is trivially convex on $Q \succeq 0$.
An explicit proof of concavity for $Q \mapsto \log \det Q$ on $Q \succeq 0$ is given on page 74 of \citep{boydCVX}. The convexity of $\text{tr}( A Q^{-1})$ on $Q \succeq 0$ for an arbitrary positive semidefinite matrix $A$ can be shown in a similar fashion, as the authors suggest in their Exercise 3.18(a). Composition with an affine mapping preserves both convexity and concavity, so $Q \mapsto \log \det \left( Q + \frac{1}{\tau^2} \Phi^{\mathrm{T}} \Phi \right)$ is concave on $Q \succeq 0$ and $Q \mapsto \text{tr} \left(A \left( Q + \frac{1}{\tau^2} \Phi^{\mathrm{T}} \Phi \right) \right)^{-1}$ is convex on $Q \succeq 0$. 

Below we report the gradient and Hessian matrices of the first three terms in this penalized likelihood, with $\otimes$ indicating the Kronecker product of two matrices. Let $W=Q^{-1}$ and $M = \left( Q + \frac{1}{\tau^2} \Phi^{\mathrm{T}} \Phi \right)^{-1}$ for shorthand.
\begin{center}
$\begin{array}{c||c c}
\hline \hline
  & \text{Gradient} & \text{Hessian} \\  \hline
\log \det  \left(Q + \frac{ 1 }{ \tau^2} \Phi^T \Phi \right) &M & - (M \otimes M) \\[5pt]
- \log \det (Q) & - W & W \otimes W \\ 
- \dfrac{1}{\tau^4} \text{tr} \left(\Phi^{\mathrm{T}}  S \Phi  M  \right)
& \frac{1}{\tau^4} M \Phi^{\mathrm{T}} S \Phi M & - \frac{1}{\tau^4} \left( M \Phi^{\mathrm{T}} S \Phi M \otimes M \right) - \frac{1}{\tau^4} \left( M \otimes M \Phi^{\mathrm{T}} S \Phi M \right)  
\end{array}$
\end{center}
Our claims of convexity and concavity in the above paragraphs can be further verified with the following fact: if $\{\lambda_i\}$ and $\{ \mu_i\}$ are the eigenvalues of $A$ and $B$, then $A \otimes B$ has eigenvalues $\{\lambda_i \mu_j\}$ (11.5, \citep{matrixhandbook}).

\subsection*{DC algorithm in \texttt{R}}

We provide an algorithm for (\ref{QUICproblem}) in \texttt{R} assuming the \texttt{QUIC} package is installed. Mathematical inputs for the algorithm are the nugget effect $\tau^2$ and the $\ell \times \ell$ matrices $\Phi^{\mathrm{T}} \Phi$, $\Phi^{\mathrm{T}} S \Phi$, and $\Lambda$.

\begin{center}
\scriptsize
\begin{Verbatim}[frame=single]
QUIC_DCsolve <- function(lambda,guess,tau_sq,Phi_Phi,Phi_S_Phi,tol,max_iterations,max_runtime_seconds){ 
     norm <- 1
     counter <- 1
     Phi_Phi_over_nugget <- Phi_Phi/tau_sq
     Phi_S_Phi_over_nuggetsq <- Phi_S_Phi / (tau_sq^2)
     start.time <- Sys.time()
     elapsed <- 0
     while (norm >= tol && counter <= max_iterations && elapsed <= max_runtime_seconds) {
          M <- chol2inv(chol(guess + Phi_Phi_over_nugget))
          S_star <- (diag(dim(guess)[1]) + M %*% Phi_S_Phi_over_nuggetsq) %*% M
          new_guess <- QUIC(S_star,rho=lambda,msg=0)$X
          current.time <- Sys.time()
          elapsed <- difftime(current.time,start.time,units="secs")
          norm <- norm(new_guess - guess,type="F")/norm(guess,type="F")
          cat("Iteration ", counter, ". Relative error: ", norm, ". Time elapsed: ", elapsed,"\n", sep="")
          guess <- new_guess
          counter <- counter+1    
      }
      if(counter > max_iterations) {
          cat("Maximum Iterations Reached. Relative error: ", sprintf("%10f",norm), sep ="")
          cat("\n")
      } else if(elapsed > max_runtime_seconds) { 
          cat("Maximum run time reached. Relative error: ", sprintf("%10f",norm), sep ="")
          cat("\n")	
      } else {
          cat("Convergence. Relative error: ", sprintf("%10f",norm), sep ="")
          cat("\n")
          }	
     guess
}
\end{Verbatim}
\normalsize
\end{center}

\subsection*{Alternative DC Algorithm}

In the body of this paper, we chose to group the summands of the likelihood (\ref{Qlikelihood}) individually so that $-\log \det (Q)$ is the only convex term and everything else is concave and hence linearized into the trace term under the DC framework, ultimately giving rise to the graphical lasso subproblem. It is interesting to note that $A \otimes A \succeq B \otimes B \iff A \succeq B$ (11.7, \citep{matrixhandbook}), which, in light of the Hessian calculations shown in the previous section, implies that the entire log determinant contribution
\[
\log \det  \left(Q + \frac{ 1 }{ \tau^2}\Phi^T \Phi \right)- \log \det (Q)
\]
is convex rather than just $- \log \det (Q)$. The DC scheme (\ref{DC}) in this case would read
\begin{equation} \label{newDC}
  Q_{j+1} = \argmin_{Q \succeq 0}   \left( \log \det  \left(Q + \frac{1 }{ \tau^2} \Phi^T \Phi \right)  - \log \det Q+\text{tr}    \left(  \nabla g(Q_j) Q\right) + \|\Lambda \circ Q\|_1 \right)
\end{equation}
where now the concave function $g(Q) = - \text{tr} \left(  \dfrac{1}{\tau^4} \Phi^{\mathrm{T}}  S \Phi  \left(Q + \frac{ 1}{ \tau^2} \Phi^T \Phi  \right)^{-1}  \right)$ is simply the trace term from the likelihood (\ref{Qlikelihood}). The hope is that linearizing less of the original function at each step may produce a different (better) stationary point, but (\ref{newDC}) is an unstudied convex problem and we cannot directly rely on the graphical lasso framework. Fortunately, several concepts from the \texttt{QUIC} algorithm extend directly to the new problem, and a few straightforward modifications to the \texttt{QUIC} source code produces a fast method for (\ref{newDC}).

On small, synthetic datasets, the estimated $\hat Q$ under the new algorithm (\ref{newDC}) and the graphical lasso DC algorithm (\ref{QUICproblem}) were indistinguishable, and the latter was faster to reach the strict tolerance $\varepsilon \ll 0.01$. In the case of the real datasets discussed in Section \ref{data}, the dimension $\ell$ is large enough so that there is a substantial difference in runtime between the two methods, but again the resulting estimates are essentially the same under a looser tolerance $\varepsilon=0.01$. We therefore chose to motivate our problem in this paper under the graphical lasso DC framework (\ref{QUICproblem}), a choice which feels natural given the interpretation of the graphical lasso as estimating the precision matrix in an undirected Gaussian graphical model. Nevertheless, it is interesting to note that that these two DC schemes seem to converge to the same estimate $\hat{Q}$ despite the fact that the original likelihood is not necessarily convex.

\end{document}